\newtheorem{theorem}{Theorem}[section]
\newtheorem{proposition}[theorem]{Proposition}
\newtheorem{corollary}[theorem]{Corollary}
\begin{document}

\title{On nonsingular potentials of Cox-Thompson inversion scheme}

\author{Tam\'as P\'almai\footnote{Electronic mail:
palmai@phy.bme.hu} {\ }and Barnab\'as Apagyi\footnote{Electronic mail:
apagyi@phy.bme.hu} \\ \ \\
Department of Theoretical Physics\\
Budapest University of Technology and Economics\\
Budafoki ut 8., H-1111 Budapest, Hungary}

\maketitle

\begin{abstract}

We establish a condition for obtaining nonsingular potentials using the Cox–Thompson inverse scattering method with one phase shift. The anomalous singularities of the potentials are avoided by maintaining unique solutions of the underlying Regge–Newton integral equation for the transformation kernel. As a by-product, new inequality sequences of zeros of Bessel functions are discovered.

\ \\

\noindent PACS: 03.65.Nk, 03.65.Ge, 02.30.Gp, 02.30.Rz
\noindent Keywords: Bessel functions, integral equations, potential scattering, Schrodinger equation
\end{abstract}

\maketitle

\section{Introduction}
The Regge-Newton integral equation of the Cox-Thompson method
\cite{CT1} for the transformation kernel reads as
\begin{equation}\label{GL}
K(x,y)=g(x,y)-\int_{0}^{x} d t\,t^{-2}K(x,t)g(t,y),\qquad x\geq y,
\end{equation}
with the input symmetrical kernel defined as
\begin{equation}\label{CTang}
g(x,y)=\sum_{l\in S}\gamma_lu_l(x_<)v_l(x_>), \qquad \begin{array}{l}
x_{<}=\min(x,y),\\
x_{>}=\max(x,y).
\end{array}
\end{equation}
Here $u_l$ and $v_l$ means, respectively, the regular and
irregular Riccati-Bessel functions defined as
$u_l(x)=\sqrt{\frac{\pi x}{2}}J_{l+\frac{1}{2}}(x)$,
$v_l(x)=\sqrt{\frac{\pi x}{2}}Y_{l+\frac{1}{2}}(x)$,
and  the explicit expression holds for the
$\gamma_l$ numbers:
\begin{equation}
\gamma_l=\frac{\prod_{L\in T}[l(l+1)-L(L+1)]}{\prod_{l'\in S,
l'\neq l}[l(l+1)-l'(l'+1)]},\qquad l\in S,
\end{equation}
 with $|S|=|T|<\infty$, $T\subset (-1/2,\infty)$, $S\subset
(-1/2,\infty)$ and $S\cap T=\emptyset$ \cite{CT1,CT2}.

Let $\Omega$ denote the set of zeros of the determinant
\begin{equation}
D(x)=\det(C(x))
\end{equation}
with
\begin{equation}\label{determ}
[C(x)]_{lL}=\frac{u_L(x)v'_l(x)-u'_L(x)v_l(x)}{l(l+1)-L(L+1)}.
\end{equation}
In Ref. \cite{CT2} it is proved that equation (\ref{GL}) is uniquely
solvable for $x\in\mathbb{R}^+\setminus\Omega$ and the elements of
$\Omega$ are isolated points. Therefore the continuous solution of
equation (\ref{GL}) (if it exists) is unique.

In Ref. \cite{Chadan} it has been shown that $tq(t)$ is not integrable near
$\tilde{x}\in\Omega.$ Therefore the potential
$q(x):=-\frac{2}{x}\frac{d}{d x}\frac{K(x,x)}{x} $ corresponding
to the  Schr\"odinger equation has poles  of order (at least) 2 at
these isolated points  $\tilde x$. Such potentials are not in
$L_{1,1}(0,\infty)$ and we call them singular potentials.

To get non-singular potentials by the Cox-Thompson method is thus
in an intimate connection with the uniqueness of solution of
equation (\ref{GL}).
From now on the treatment is restricted to the one-term limit, i.e.,
to the case when $|S|=|T|=1$. Such a case represents a natural first
step compared to the uniqueness solution of the CT method with finite
number of phase shifts. Also, the one phase shift case is closely
related to the phenomenon of quantum resonance scattering (when the resonance-like enhancement of the
total cross section is mainly determined by a single partial wave) or
the Ramsauer-Townsend effect (when the electron-atom
interaction is governed mainly by the $p$-wave phase shift) \cite{Bransden}.

In the one-term limit, the numerator of
equation (\ref{determ}) becomes the Wronskian
\begin{equation}\label{Wronski}
W_{Ll}(x)=u_L(x)v'_l(x)-u'_L(x)v_l(x)=
\frac{\pi x}{2}\left(J_{L+\frac{1}{2}}(x)Y'_{l+\frac{1}{2}}(x)-J'_{L+\frac{1}{2}}(x)Y_{l+\frac{1}{2}}(x)\right).\phantom{00}
\end{equation}
To ensure a unique solution of the Regge-Newton integral equation (\ref{GL}), we shall establish  a
condition for $W_{Ll}(x)\ne 0, \quad x\in (0,\infty)$. This is
also the condition for constructing a non-singular potential
$q(x), \, x\in (0,\infty)$ at the one-term level $|S|=|T|=1$.

\section{Condition for constructing non-singular potentials from one specified phase shift}
Let $S=\{l\}$ and $T=\{L\}$, $L\neq l$. In order to get a
potential that belongs to the class $L_{1,1}(0,\infty)$ we shall
prove the next statement.
\begin{theorem}
$W_{Ll}(x)\ne 0,\quad x\in (0,\infty) \iff 0<|L-l|\le 1$.
\end{theorem}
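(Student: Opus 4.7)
My plan rests on the differential identity
$$W_{Ll}'(x) \;=\; \frac{l(l+1)-L(L+1)}{x^{2}}\, u_L(x)\, v_l(x),$$
derived by multiplying the Riccati--Bessel equations $u_L'' + (1 - L(L+1)/x^2)u_L = 0$ and $v_l'' + (1 - l(l+1)/x^2)v_l = 0$ by $v_l$ and $u_L$ respectively and subtracting. This identifies the critical points of $W_{Ll}$ as the union of the zeros $\{j_{L+1/2,k}\}$ of $u_L$ and $\{y_{l+1/2,k}\}$ of $v_l$. Combined with the standard small- and large-$x$ expansions of $J_\nu$ and $Y_\nu$, it also yields the endpoint behavior: as $x \to 0^{+}$, $W_{Ll}(x)$ is strictly positive (tending to $0$ from above if $L>l$, or to $+\infty$ if $L<l$), and
$$W_{Ll}(x)\;\sim\;\cos\!\tfrac{(L-l)\pi}{2}\;+\;\tfrac{(L-l)(L+l+1)}{2x}\sin\!\tfrac{(L-l)\pi}{2}\;+\;O(x^{-2}).$$

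The main reduction is that at a critical point $x^{*}$, the value of $W_{Ll}$ simplifies to $-u_L'(x^{*})\,v_l(x^{*})$ when $u_L(x^{*})=0$, and to $u_L(x^{*})\,v_l'(x^{*})$ when $v_l(x^{*})=0$; each factor then has a standard sign determined by the Bessel recurrences. Assuming WLOG $L>l$ and bookkeeping these signs along the sequence of critical points, I find that every local minimum of $W_{Ll}$ is strictly positive \emph{if and only if} the zeros satisfy the one-to-one interlacing
$$y_{l+1/2,k}\;<\;j_{L+1/2,k}\;<\;y_{l+1/2,k+1},\qquad k\ge 1.$$
Given this interlacing, $W_{Ll}$ is positive at $0^{+}$ and at every critical point, hence positive throughout $(0,\infty)$ by monotonicity between consecutive critical points. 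If instead the interlacing fails at some $k$ (an extra or a missing zero $y_{l+1/2,\cdot}$ separates consecutive zeros of $u_L$), the same sign bookkeeping produces a critical value of $W_{Ll}$ that is $\le 0$, and continuity forces a zero in $(0,\infty)$.

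The crux of the theorem is therefore to show that this interlacing holds precisely when $0<|L-l|\le 1$, with the case $L<l$ handled symmetrically. The result does not follow from the classical same-order interlacing $y_{\nu,k}<j_{\nu,k}<y_{\nu,k+1}$ together with the monotonicity of $j_{\nu,k}$ and $y_{\nu,k}$ in the order $\nu$, which compare zeros only of Bessel functions of the same order or the same kind. A dedicated argument is required to locate $j_{L+1/2,k}$ between $y_{l+1/2,k}$ and $y_{l+1/2,k+1}$ whenever $L$ is within one unit of $l$, together with a sharpness analysis showing the locating fails as soon as $|L-l|>1$. These mixed-order inequalities are exactly the ``new inequality sequences of zeros of Bessel functions'' advertised in the abstract, and they constitute the principal technical obstacle.
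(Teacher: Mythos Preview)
Your reduction is essentially the same as the paper's: both identify the critical points of $W_{Ll}$ with the zeros $j_{L+1/2,n}$ and $y_{l+1/2,n}$, evaluate $W_{Ll}$ there as the simple products $-u_L'v_l$ or $u_Lv_l'$, and then observe that positivity of all critical values is equivalent to the ``regular'' interlacing
\[
y_{l+1/2,\,k}<j_{L+1/2,\,k}<y_{l+1/2,\,k+1}\qquad(k\ge1,\ L>l).
\]
Your use of the differential identity $W_{Ll}'=\dfrac{l(l+1)-L(L+1)}{x^{2}}\,u_Lv_l$ is a clean way to get there; the paper reaches the same point by direct evaluation. So far, so good.

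The genuine gap is in your last paragraph: you correctly isolate the crux---the mixed-order inequality $j_{L+1/2,\,k}<y_{l+1/2,\,k+1}$ for $0<L-l\le1$, together with its failure for $L-l>1$---but you do not prove it. Saying that ``a dedicated argument is required'' and calling it ``the principal technical obstacle'' is an honest admission that the proof is not finished. This inequality is precisely the new content of the paper (their Proposition~2.3), and without it the theorem is only reduced, not established.

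For orientation, here is the missing idea. By monotonicity of Bessel zeros in the order it suffices to treat the extremal case $L=l+1$, i.e.\ to show $j_{\nu+1,\,n}<y_{\nu,\,n+1}$ for $\nu=l+\tfrac12>0$. The paper inserts the intermediate point $j'_{\nu,\,n+1}$ via the classical chain $j'_{\nu,\,n+1}<y_{\nu,\,n+1}$ and then proves the \emph{new} inequality $j_{\nu+1,\,n}<j'_{\nu,\,n+1}$. The latter follows from the recurrence identity
\[
J_{\nu+1}(j'_{\nu,\,n+1})=\frac{\nu}{j'_{\nu,\,n+1}}\,J_{\nu}(j'_{\nu,\,n+1}),
\]
which forces $J_{\nu+1}$ and $J_{\nu}$ to share a sign at $x=j'_{\nu,\,n+1}$; combined with the standard interlacing $j_{\nu,1}<j_{\nu+1,1}<j_{\nu,2}<\cdots$ this pins down that exactly $n$ zeros of $J_{\nu+1}$ lie to the left of $j'_{\nu,\,n+1}$. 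For the converse (failure of interlacing when $|L-l|>1$) the paper does not rely on the interlacing criterion throughout: in the range where $\cos\!\bigl((L-l)\pi/2\bigr)<0$ it uses the sign discrepancy between $W_{Ll}(0^{+})>0$ and $W_{Ll}(\infty)<0$ directly, and only in the complementary range does it argue via breakdown of the regular sequence by comparison with a nearby order. Your plan would need to supply both pieces.
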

\begin{proof}
First we prove that there exists $x>0$ such that $W_{Ll}(x)=0$ if
$|L-l|> 1$. Let $1+4k<l-L<3+4k$ with
$k\in\mathbb{ Z}$.
 Then the different sign of the Wronskian at the
origin $ W_{Ll}(x\to 0)=x^{L-l} \left[\frac{2^{l-L-1}(L+l+1)
\Gamma\left(l+\frac{1}{2}\right)}{\Gamma\left(L+\frac{3}{2}\right)}+O(x^{2l+1})\right]>0
$ and at the infinity $
W_{Ll}(x\to\infty)=\cos\left[\left(l-L\right) \frac{\pi}{2}\right]
< 0 $ clearly signals the existence of at least one zero position
$\tilde x$ for which $W_{Ll}(\tilde x)=0$ because of the
continuity of $W_{Ll}(x)$.

For the uncovered region of $3+4k<l-L<5+4k$ with $k\in\mathbb{
Z}\setminus \{-1\}$ we shall use the standard notation for the
$n$th zeros
$j_{L+\frac{1}{2},n},\,j'_{L+\frac{1}{2},n},\,y_{l+\frac{1}{2},n},\,y'_{l+\frac{1}{2},n}$
of the Bessel functions
$J_{L+\frac{1}{2}}(x),\,J_{L+\frac{1}{2}}'(x),\,Y_{l+\frac{1}{2}}(x),\,Y_{l+\frac{1}{2}}'(x)$.
Let now $l<L$. We  term {\it regular} sequence of zeros
 if the following interlacing holds for the $n$th and $(n+1)$th zeros:
$y_{l+\frac{1}{2},n}<j_{L+\frac{1}{2},n}<y_{l+\frac{1}{2},n+1}<j_{L+\frac{1}{2},n+1}. $ It is a simple matter to
see that the local extrema of $W_{Ll}(x)$ within the interval
$y_{l+\frac{1}{2},n}<x<j_{L+\frac{1}{2},n+1}$ possess the same sign in case of regular
sequence interlacing. This is because at the extremum positions
$y_{l+\frac{1}{2},n} $ and $j_{L+\frac{1}{2},n} $ of $W_{Ll}(x)$
the Wronskian simplifies to
\begin{equation}
W_{Ll}(x_{n})=\left\{\begin{array}{ll}
\frac{\pi x}{2}J_{L+\frac{1}{2}}(x_{n})Y'_{l+\frac{1}{2}}(x_{n})&\quad{\rm if}\quad Y_{l+\frac{1}{2}}(x_{n})=0,\,x_n=y_{l+\frac{1}{2},n}\\
-\frac{\pi x}{2}J'_{L+\frac{1}{2}}(x_{n})Y_{l+\frac{1}{2}}(x_{n})&\quad{\rm if}\quad
J_{L+\frac{1}{2}}(x_{n})=0,\,x_n=j_{L+\frac{1}{2},n}.\end{array} \right.
\end{equation}
Now, in case of any deviation from this regular sequence, e.g.,
when an {\it irregular} sequence
$y_{l+\frac{1}{2},n}<y_{l+\frac{1}{2},n+1}<j_{L+\frac{1}{2},n}$ is
first encountered at a particular $n=1,2,...$, one gets different
signs for the two consecutive extrema of the Wronskian at
$y_{l+\frac{1}{2},n}$ and $y_{l+\frac{1}{2},n+1}$, respectively.
This assumes the appearance of a zero position of $W_{Ll}(x)$
within the region $y_{l+\frac{1}{2},n}<x<y_{l+\frac{1}{2},n+1}$.
In summary, observing regular sequences of interlacing for all
$n>0$ is equivalent to the absence of roots of $W_{Ll}(x)$. To see
that in the considered region such deviation from the regular
sequence interlacing happens we present the following argument.
Let $L'<L$ such that $1+4k<l-L'<3+4k$. For $W_{L'l}$ the first
deviation from the regular sequence takes place at some $n'$. It
is easy to see that by increasing $L'$ to $L$ one cannot get a
regular sequence and the first deviation will occur at some $n\leq
n'$. Note that the case $L<l$ can be similarly treated.

Turning now to the most interesting domain of $0<|L-l|\le 1$, we
consider again the case $l<L$ and the regular sequence of zero
interlacing,
$y_{l+\frac{1}{2},n}<j_{L+\frac{1}{2},n}<y_{l+\frac{1}{2},n+1}<j_{L+\frac{1}{2},n+1}$.
As indicated above, its fulfillment ensures lack of root of the
Wronskian: $W_{Ll}(x)\ne 0, \, x \in (0<x<\infty)$. By noting that
any $n$th zero of a Bessel function is a strictly growing function
of the order it is sufficient to prove that
$y_{k,n}<j_{k+1,n}<y_{k,n+1}<j_{k+1,n+1}$, holds for $k\in
(0,\infty)$ and $n\in \mathbb{N}\setminus\{0\}$. The only unknown
inequality here is that of $j_{k+1,n}<y_{k,n+1}$. To prove its
validity we use the known intermediate relation
$j_{k,n+1}'<y_{k,n+1}$. Therefore, proving $j_{k+1,n}<j_{k,n+1}'$
will suffice. Consider the known relation
$J_{k+1}(j'_{k,n+1})=\frac{k}{j'_{k,n+1}}J_{k}(j'_{k,n+1})$ which
means that $J_{k+1}$ and $J_{k}$ have the same sign at
$x=j'_{k,n+1}$. Now because of the interlacing property
$j_{k,1}<j_{k+1,1}<j_{k,2}<...$ and the limit $J_k(x\to0)=0^+$
$\forall k>0$, this implies that the $n$th zero of $J_{k+1}(x)$
precedes the $(n+1)$th zero of $J'_{k}(x)$, i.e.
$j_{k+1,n}<j'_{k,n+1}<y_{k,n+1}$ which had to be proven. Note that
the case $L<l$ can be similarly treated.
\end{proof}

\begin{corollary}
In case of $|S|=1$, the Cox-Thompson inverse scattering scheme
yields a potential of the class $L_{1,1}(0,\infty)$ iff the
condition $0<|l-L|\leq1$ holds.
\end{corollary}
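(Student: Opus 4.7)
The plan is to reduce the Corollary to the Theorem through the equivalence, already set up in the introduction, between absence of singular points of the Regge--Newton equation and the potential lying in $L_{1,1}(0,\infty)$.

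First I would observe that in the one-term limit $|S|=|T|=1$ with $S=\{l\}$, $T=\{L\}$, the matrix $C(x)$ collapses to the scalar
\[
[C(x)]_{lL}=\frac{W_{Ll}(x)}{l(l+1)-L(L+1)}.
\]
Since $S\cap T=\emptyset$ forces $l\ne L$, the denominator is a nonzero constant, so $D(x)=0$ iff $W_{Ll}(x)=0$. Hence in this regime
\[
\Omega=\{x>0:W_{Ll}(x)=0\}.
\]

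For the ``only if'' direction I would argue by contraposition: if the condition $0<|l-L|\le 1$ fails then, since $l\ne L$ already guarantees $|l-L|>0$, one must have $|l-L|>1$. By the Theorem there exists $\tilde x\in(0,\infty)$ with $W_{Ll}(\tilde x)=0$, so $\tilde x\in\Omega$. Invoking the result of Chadan cited in the introduction, $t\,q(t)$ fails to be integrable on any neighborhood of $\tilde x$, so $q\notin L_{1,1}(0,\infty)$. For the ``if'' direction, assuming $0<|l-L|\le 1$, the Theorem gives $W_{Ll}(x)\ne 0$ for all $x\in(0,\infty)$, hence $\Omega=\emptyset$. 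By the uniqueness result of Ref.~\cite{CT2} the integral equation (\ref{GL}) has a unique continuous solution $K(x,y)$ on all of $\mathbb{R}^+\times\mathbb{R}^+$ (restricted to $x\ge y$), so the map $x\mapsto K(x,x)/x$ is smooth on $(0,\infty)$ and the resulting $q(x)=-\frac{2}{x}\frac{d}{dx}\frac{K(x,x)}{x}$ is free of interior singularities.

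The step I expect to be the main obstacle is verifying that, in the regular regime $\Omega=\emptyset$, the resulting $q$ genuinely satisfies the $L_{1,1}(0,\infty)$ integrability conditions at the endpoints $0$ and $\infty$, rather than merely being locally regular on $(0,\infty)$. Absence of interior poles only rules out the anomalous singularities discussed by Chadan; the endpoint behavior must be read off from the small-$x$ and large-$x$ asymptotics of the Riccati--Bessel functions $u_l,v_l$ entering the input kernel $g(x,y)$, propagated through the Volterra equation (\ref{GL}). I would handle this by expanding $K(x,x)$ near $x=0$ using the known behavior $u_l(x)\sim x^{l+1}$, $v_l(x)\sim x^{-l}$ to confirm integrability against $x$ near the origin, and by using the oscillatory/bounded large-$x$ asymptotics of the Bessel factors, together with the fact that in the one-term case $g(x,y)$ enjoys a factored product structure, to secure the required decay at infinity.
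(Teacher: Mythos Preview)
Your reduction is exactly the one the paper uses: the Corollary is stated without a separate proof and is meant to follow immediately from the Theorem together with the equivalence, set up in the introduction, between $\Omega=\emptyset$ and the potential being nonsingular (i.e.\ in $L_{1,1}$). Your identification of the endpoint-integrability check as the residual obstacle is apt---the paper simply takes for granted that once the interior poles are excluded the Cox--Thompson output lies in $L_{1,1}(0,\infty)$, so your proposed asymptotic analysis of $K(x,x)$ via the small- and large-$x$ behavior of $u_L,v_l$ would in fact supply a detail the paper omits.
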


In the course of the proof we obtained the following result of its
own right:

\begin{proposition}
Denoting the $n$th root of the Bessel functions $J_\nu(x)$,
$Y_\nu(x)$, $J'_\nu(x)$, respectively, by $j_{\nu,n}$,
$y_{\nu,n}$, $j'_{\nu,n}$ then the following inequality is valid
for $\nu>0$: $j_{\nu+1,n}<j'_{\nu,n+1}$.
\end{proposition}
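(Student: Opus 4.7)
The plan is to deduce the proposition from two classical ingredients: the differential recurrence
$J_{\nu+1}(x) = \frac{\nu}{x}J_\nu(x) - J'_\nu(x)$
(equivalent to $\frac{d}{dx}[x^{-\nu}J_\nu(x)] = -x^{-\nu}J_{\nu+1}(x)$), and the classical interlacing relations
\[
  j_{\nu,n} < j_{\nu+1,n} < j_{\nu,n+1}, \qquad j_{\nu,n} < j'_{\nu,n+1} < j_{\nu,n+1}.
\]
The first places $j_{\nu+1,n}$ inside the interval $I_n := (j_{\nu,n}, j_{\nu,n+1})$; the second places the critical point $j'_{\nu,n+1}$ inside that same interval. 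So the task reduces to deciding which of the two points lies to the left.

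The key observation is that plugging $x = j'_{\nu,n+1}$ into the recurrence kills the $J'_\nu$ term, yielding
\[
  J_{\nu+1}\!\left(j'_{\nu,n+1}\right) \;=\; \frac{\nu}{j'_{\nu,n+1}}\,J_\nu\!\left(j'_{\nu,n+1}\right),
\]
so for $\nu>0$ the values $J_{\nu+1}(j'_{\nu,n+1})$ and $J_\nu(j'_{\nu,n+1})$ share the same (nonzero) sign. Now $J_\nu$ has constant sign $(-1)^n$ throughout $I_n$ and in particular at its interior extremum $j'_{\nu,n+1}$; hence $J_{\nu+1}(j'_{\nu,n+1})$ also has sign $(-1)^n$.

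Finally I would carry out a short sign chase on $J_{\nu+1}$ inside $I_n$. By the interlacing property, $j_{\nu+1,n}$ is the unique zero of $J_{\nu+1}$ in $I_n$ (since $j_{\nu+1,n-1} < j_{\nu,n}$ and $j_{\nu+1,n+1} > j_{\nu,n+1}$). Because $J_{\nu+1}$ has sign $(-1)^{n-1}$ immediately to the left of $j_{\nu+1,n}$ and sign $(-1)^n$ immediately to the right, the point $j'_{\nu,n+1} \in I_n$, at which $J_{\nu+1}$ carries the sign $(-1)^n$, must lie to the right of $j_{\nu+1,n}$. This gives $j_{\nu+1,n} < j'_{\nu,n+1}$, as claimed.

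The step that requires the most care is the last one: one must make sure that no other zero of $J_{\nu+1}$ sneaks into the interval $I_n$ and confuses the sign pattern. This is exactly what the classical interlacing $j_{\nu,n}<j_{\nu+1,n}<j_{\nu,n+1}$ prevents, so once the interlacing is quoted the remaining argument is just bookkeeping. No delicate asymptotics or monotonicity-in-$\nu$ machinery is needed beyond the standard interlacing theorems.
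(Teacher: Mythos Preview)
Your proof is correct and follows essentially the same route as the paper: both use the recurrence relation evaluated at $x=j'_{\nu,n+1}$ to force $J_{\nu+1}$ and $J_\nu$ to share a sign there, and then invoke the classical interlacing $j_{\nu,n}<j_{\nu+1,n}<j_{\nu,n+1}$ together with the positivity of $J_\nu$ near the origin to pin down which side of $j_{\nu+1,n}$ the point $j'_{\nu,n+1}$ must lie on. Your write-up is somewhat more explicit (you spell out the sign pattern on $I_n$ and the uniqueness of the zero of $J_{\nu+1}$ there), but the argument is the same in substance.
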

This proposition adds two new inequality sequences to the known
ones (see e.g. Ref. \cite{Abramowitz}):
$j_{\nu,n}<j_{\nu+1,n}<j'_{\nu,n+1}<j_{\nu,n+1},$ and
$j_{\nu+1,n}<y_{\nu,n+1}.$

\section{Construction of potentials from one phase shift}

One can construct a potential that possesses one specified phase
shift $\delta_l$ ($|S|=1$) by using the inversion scheme of Cox
and Thompson \cite{CT1,Apagyi}:
\begin{equation}
q(x)=-\frac{2}{x}\frac{d}{d x}\frac{K(x,x)}{x},
\end{equation}
\begin{equation}\label{Keq}
K(x,y)=\frac{l(l+1)-L(L+1)}{u_L(x)v'_l(x)-u'_L(x)v_l(x)}v_l(x)u_L(y),
\end{equation}
\begin{equation}\label{tan}
\tan(\delta_l-l\pi/2)=\tan(-L\pi/2).
\end{equation}
Relation (\ref{tan}) gives $L=l-\frac{2}{\pi}\delta_l+2n$, $n\in\mathbb{Z}$.
For $\delta_l\in [-\frac{\pi}{2},\frac{\pi}{2}]$
the Corollary  results in the choice of $n=0$. Therefore, for any
$\delta_l$, there is only one, easily identifiable non-singular
potential and an infinite number of singular potentials that the
Cox-Thompson method can produce.

For an example let us choose $l=0$ and $\delta_0=0$. In this  case
equation (\ref{tan}) yields $L=2n$, $n\in\mathbb{Z}$.
 $L=0$ ($n=0$) is not permitted by the assumption
$S\cap T=\emptyset$, however in order to get a non-singular
potential one may replace this $L=0$ by $L_n$ with
$\lim_{n\to\infty}L_n=0$. Using equation (\ref{Keq}) one gets at
$l=0$ and $L=L_n$
\begin{equation}
K_n(x,x)=\frac{-L_n(L_n+1)}{1+\varepsilon^1_n}(v_0(x)u_0(x)+\varepsilon^2_n).
\end{equation}
Since  $u_L(x)v'_l(x)-u'_L(x)v_l(x)$ and $v_l(x)u_L(x)$ are
continuous in $L$ and $u_l(x)v'_l(x)-u'_l(x)v_l(x)=1,\,\forall\,
l$, $\lim_{n\to\infty}\varepsilon^{1,2}_n=0$ holds. Thus
$\lim_{n\to\infty}q_n(x)\equiv0$ for $x>0$. This is the physical
solution. (See Fig. 1.)

\begin{figure}
\begin{center}
\includegraphics{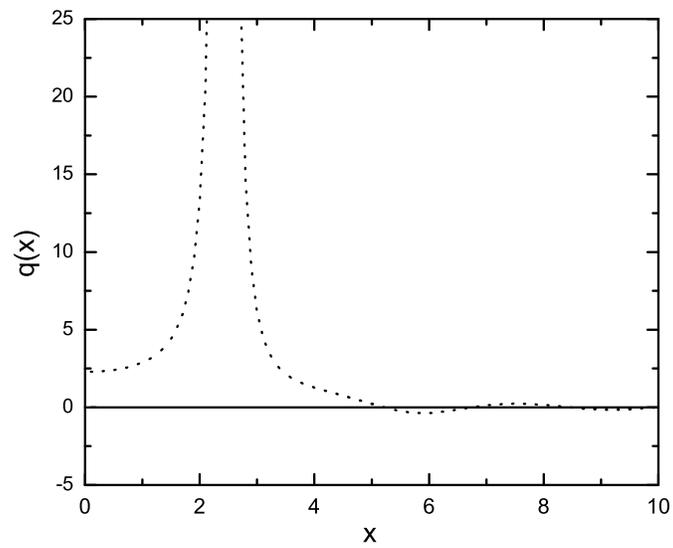}
\caption{Nonsingular (full line) and singular (dotted line)
potentials yielded by $\delta_{l=0}=0$ with $L\to 0$, and $L=2$.}
\end{center}\label{fig1}
\end{figure}

Now let $l=0$ and $L=2$. By the Corollary we cannot get an
integrable potential in this case because $|l-L|>1$ (see Fig.
1). In Ref. \cite{Ramm} it has been shown explicitly that equation
(\ref{GL}) is not uniquely solvable at some $x$  for this case.
However, while Ref. \cite{Ramm} suggests that this fact makes the
Cox-Thompson scheme useless, in this paper we have shown that in
order to get an integrable potential, the choice $L=2$ is not
permitted because the set $\Omega$ is not empty. On the other
hand, equation (\ref{tan}) and the Corollary provide a one-to-one
correspondence between the phase shift and the $L$ parameter of
the Cox-Thompson method at the one-term level. This correspondence
has the property that the potential constructed from $L$ belongs
to $L_{1,1}(0,\infty)$ and possesses the specified phase shift.

In Fig. 2 we add examples of the construction of an unique
potential in $L_{1,1}$ in the case of non-trivial phase shifts.
  Subfigure Fig. 2a shows the potentials for zero angular momentum
with $\delta_{l=0}=0.780$
 corresponding to $L=-0.497$ which is permitted by the Corollary (nonsingular case, full line)
 and $L=-0.497+2$ which violates the Corollary (one-singularity case, dotted line).
  Subfigure Fig. 2b shows the potentials for the $p$-wave phase shift $\delta_{l=1}=1.50$
 corresponding to $L=0.045$ [permitted by the Corollary (nonsingular case, full line)]
 and $L=0.044+4$  [violating the Corollary (two-singularity case, dotted line)].
We note that the second singularity of the dotted curve in Fig. 2b
lies out of the region shown and,
  as expected form the {\it Proof}, the singular potentials in Fig. 2a and 2b have one and two locally
  non-integrable region(s) corresponding to $1<|l-L|<3$ and $3<|l-L|<5$, respectively.

\begin{figure}[ht!]
\begin{center}
\subfloat[]{\includegraphics[scale=0.75]{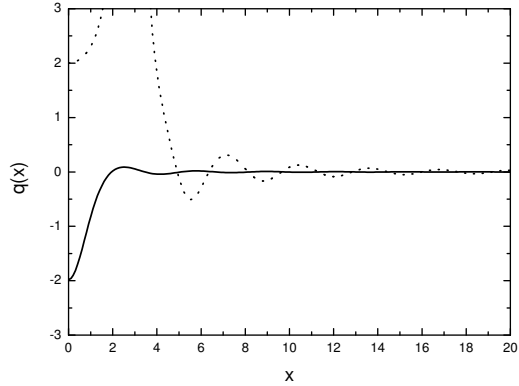}}\\
\subfloat[]{\includegraphics[scale=0.75]{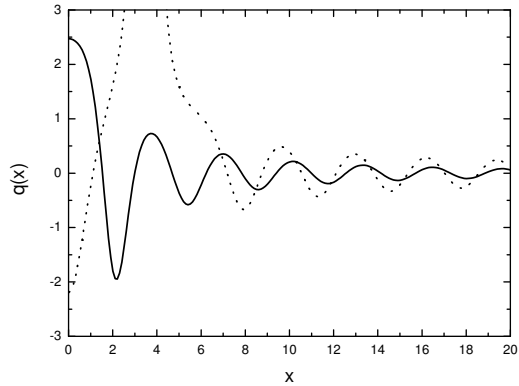}}
\caption{Nonsingular (full line) and singular (dotted line) potentials yielded by nontrivial phase shifts. (a)
$\delta_{l=0}=0.780$ with $L=-0.497$ (full line) and $L=-0.497+2$ (dotted line); (b)
$\delta_{l=1}=1.50$ with $L=0.045$ (full line) and $L=0.045+4$ (dotted line, see text).}
\end{center}\label{fig2}
\end{figure}

Applications (see Refs.
\cite{Apagyi,Melch,Apagyi2,Schumayer,Palm1,Palm2}) of the
Cox-Thompson scheme for $|S|>1$ suggest the existence of a
connection similar to the Corollary of Section 2 that specifies
one nonsingular potential out of the possible infinite singular
solutions. However such a theorem has, as yet, not been proven.

\section*{Acknowledgement}

The authors thank Professors Mikl\'os Horv\'ath and D\'aniel
Schumayer for reading the manuscript.

\end{document}